\begin{document}
\title{Complexity of Tetris variant}
\author{Oscar Temprano\\
\small{\texttt{oscartemp@hotmail.es}}}
\date{}
\maketitle

\newtheorem{teorema}{Theorem}
\newtheorem{lema}{Lemma}

\section{Introduction}

The game complexity of Tetris was first studied by Erik demaine, Susan Hohenberger and David Liben-Nowell in “Tetris is hard" \cite{Te}.\\
Their proof was later simplified in “Tetris is hard, made easy” \cite{Te2}.\\
The authors of the last two papers published a joint paper \cite{Te3},\\
In “Tetris is hard” they left open the question of the complexity of a variant of the game.\\
In this variant, we can move and rotate a piece all the time we want in the first row, and the piece will not fall unless we move it down.\\
Once we move the piece down we will not be able to move or rotate the piece anymore.\\
This way to play tetris was introduced by John Brzustowsky \cite{TeWin}.\\
In this paper we are going to prove that the problem of clearing the board in this variant of the game is NP-complete, solving a problem that has been open during thirteen years.\\
To accomplish this task we are going to reduce the 3-partition problem to the problem of clearing a particular board on that variant of tetris.\\

\section{Rules of Tetris} 

The game tetris consists of a game board and a set of pieces (tetriminos) that fall down to the game board.\\
Every tetrimino is composed of four blocks.\\
When a piece falls, it will fall on top of another piece or in the bottom of the game board.
The objective of the player is to survive as much time as possible by clearing the maximum amount of lines in the process.\\
A line is completed when is completely filled by blocks, when that happens, that line disappears.
The game ends when blocks reach the top of the board.\\
There are seven different tetriminos:\\
Left gun (LG), right gun (RG), I(I), square (Sq), left snake (LS) , right snake (RS) and T (T).\\
To see a better description of the rules of Tetris and the pieces we use, check \cite{Te}.

\section{Reduction}

Now, we are going to show the details of the reduction.

We are going to describe the 3-PARTITION problem and
we are going to see how is the initial board, and how to clear the board, if the instance of three partition is satisfiable
\\
\\
\subsection{The 3-PARTITION problem}

The 3-Partition problem can be defined as follows:

Given a multiset $A$ of positive numbers $a_1$,...,$a_3s$ and a positive value $B$ such that $B/4$ < $a_1$ < $B/2$ for all 1 $\leq i \leq 3s$ such that $\sum_{i=1}^{3s}a_{i} = sB$.
Can $A$ be divided into $s$ disjoint subsets $A_{1}$,...,$A_{s}$ such that $\sum_{a_{i}\in A_{j}} = B$ for all 1 $\leq j \leq s$

\begin{teorema}
\label{partition}
(Garey and johnson \cite{GJ} ) 3-PARTITION is np-complete in the strong sense.
\end{teorema}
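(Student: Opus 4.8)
The plan is to establish two things: that 3-PARTITION lies in NP, and that it is NP-hard even when all input integers are bounded by a polynomial in the length of the instance. This polynomial bound on the numbers is exactly what \emph{strong} NP-completeness asks for, since (together with the fact that 3-PARTITION has a pseudo-polynomial dynamic program) it rules out a polynomial-time algorithm unless $\mathrm{P}=\mathrm{NP}$. Membership in NP is immediate: a certificate is the partition $A_1,\dots,A_s$ itself, and one checks in polynomial time that the parts are disjoint, cover $A$, and each sums to $B$. Note that the hypothesis $B/4 < a_i < B/2$ forces any subset of $A$ summing to $B$ to contain exactly three elements, so ``partition into $s$ triples'' and ``partition into $s$ subsets of total sum $B$'' coincide; this structural fact will be used repeatedly.

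For hardness I would reduce from 3-DIMENSIONAL MATCHING (3DM), which is NP-complete and, having no numerical parameters at all, is trivially strongly NP-complete; the reduction is carried out in two stages, first to 4-PARTITION (split $4m$ numbers into $m$ quadruples of common sum) and then from 4-PARTITION to 3-PARTITION. In the first stage, given a 3DM instance with element sets $W,X,Y$ of size $q$ and a candidate triple set $M$, I would introduce one integer for each element of $W\cup X\cup Y$ and one for each triple in $M$, all written in a positional notation with a sufficiently large base $r$ (of magnitude polynomial in $|M|$) so that additions never carry between digit blocks. The digit blocks are laid out so that the integer of a triple, added to the integers of its three constituent elements, produces a fixed target $B$, while any other combination of four integers fails to match $B$ in some block; ``dummy'' integers are added to absorb the $|M|-q$ triples that are not chosen. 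The second stage, 4-PARTITION $\to$ 3-PARTITION, multiplies all numbers by a fixed constant and appends carefully chosen filler elements together with a new target, arranged so that each intended quadruple is realized as exactly two triples and no unintended triple can reach the new target.

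The main obstacle, and the part that needs the most care, is the design of the base-$r$ encoding in the 3DM $\to$ 4-PARTITION step: one must guarantee simultaneously that (i) every element and every triple is covered the correct number of times by any valid quadruple partition, so that such a partition exists precisely when $M$ contains a perfect matching, with no ``spurious'' quadruples summing to $B$ by accident, and (ii) all constructed integers stay bounded by a polynomial in $q$ and $|M|$, which is what upgrades the conclusion from ordinary to strong NP-completeness. Ruling out spurious sums is handled by choosing $r$ larger than the number of summands that can contribute to any single block, so that no block overflows, and by inserting dummy entries that are the unique way to complete the count for unmatched triples; verifying the polynomial size bound is then a routine count of the number and magnitude of the digit blocks. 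Finally I would check that the resulting 3-PARTITION instance satisfies the required inequality $B/4 < a_i < B/2$, rescaling all numbers and adding a common additive offset if necessary, an operation that does not affect the existence of a valid partition into triples.
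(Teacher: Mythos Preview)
The paper does not actually prove this theorem: it is stated purely as a citation of a known result from Garey and Johnson, with no argument supplied. There is therefore no ``paper's own proof'' to compare against.

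That said, your outline is essentially the classical Garey--Johnson argument that the citation points to: membership in NP via the partition as certificate, and strong NP-hardness via the chain $\textsc{3DM}\to\textsc{4-Partition}\to\textsc{3-Partition}$ using a large-base positional encoding to prevent carries and dummy elements to absorb unused triples. As a high-level plan this is correct and faithful to the cited source; the only caveat is that several steps you describe as routine (ruling out spurious quadruples, the $\textsc{4-Partition}\to\textsc{3-Partition}$ pairing gadget, and enforcing the $B/4<a_i<B/2$ window after rescaling) each require a page or two of concrete bookkeeping to make airtight, so the proposal is a sketch rather than a proof. For the purposes of this paper, simply citing the result, as the author does, is appropriate.
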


\subsection{The initial board}

The width of the board is $4 \times s + 3$ (where "s" is the number of subsets in which we have to place the numbers). 
The height of the board equals $16 + T \times 4 + 5$, (where T is the number that each subset should add to), we add the number five at the end so that the player has five extra lines to decide the movements of the pieces.

Now, we are going to see a picture of the board at the start of the game, this board corresponds to the multiset ($\lbrace 4,3,2,1,1,1,2,2,2 \rbrace$).

\begin{figure}[H]
  \centering
    \includegraphics{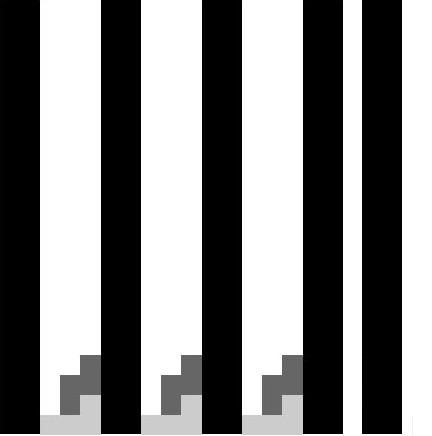}
  \caption{Example of an initial board}
  \label{fig:board}
\end{figure}

The board consists of three "buckets". Each "bucket" represents a subset in which we have to place the numbers.\\
As there are nine numbers, we have three buckets, one for each respective subset.
Each bucket has a separator at the left, and has a width of three, for a total width of four blocks per bucket.
Each bucket has an height of $16 + T \times 4$

In each bucket there are two planted pieces, a right gun, and a left snake on top of it, as can be seen in th figure.\\
Finally, at the  right of the buckets, there is a fill space with a width of one block, and two separators ,one at the left and other at the right, with a width of one or more each.

To see how to construct the board check \cite{TeCon}

Now, we are going to show how to clear the board.

First we receive a "left gun" piece. With this piece we have to choose in which bucket we are going to put the pieces that belong to the first number.

To do that we must put the piece in top  of the bucket to "open" it, as seen in the following figure:

\begin{figure}[H]
  \centering
    \includegraphics{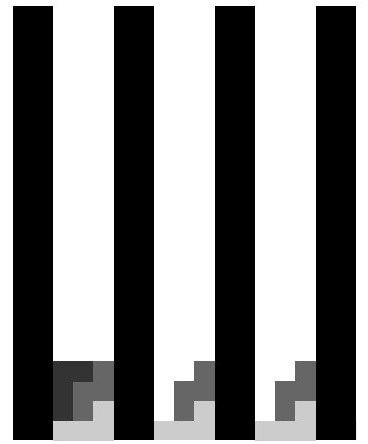}
  \caption{How to open a bucket}
  \label{fig:open}
\end{figure}

Next, we receive the pieces for every number that belongs to the multiset. Each digit of these pieces is formed by two "T" pieces and a right gun.

We place the pieces of a particular digit like this:

\begin{figure}[H]
  \centering
    \includegraphics{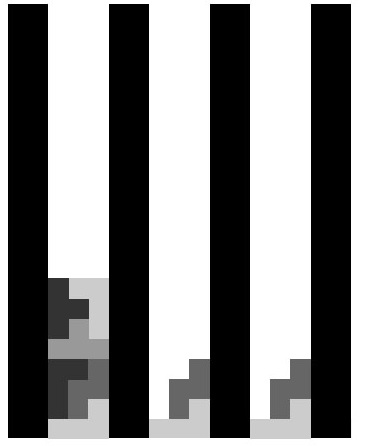}
  \caption{First digit}
  \label{fig:first_digit}
\end{figure}

\newpage

In our example, the first digit is four, so we have to place four digits:

\begin{figure}[H]
  \centering
    \includegraphics{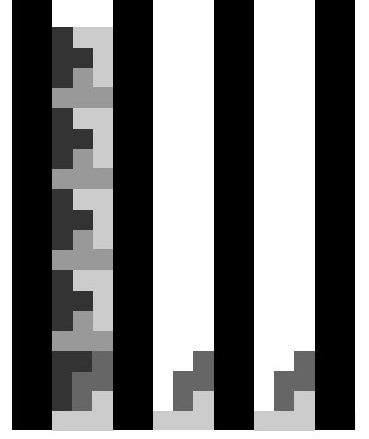}
  \caption{First number}
  \label{fig:first_numer}
\end{figure}

\newpage

Finally, the only remaining thing to do is to close the bucket, to do that, we receive a right gun and a left snake, we must put these pieces on top of the opened bucket:

\begin{figure}[H]
  \centering
    \includegraphics{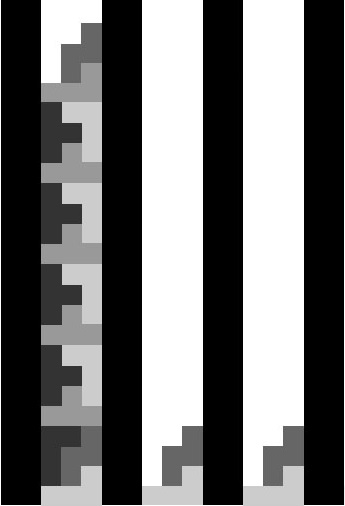}
  \caption{Bucket closed}
  \label{fig:close_bucket}
\end{figure}

We repeat this sequence of movements for every number that belongs to the multiset.

So, for each number we reecive:

- A left gun to open a bucket.

- For every digit of a number: two "T" pieces and a right gun.

- A right gun and a left snake to close the bucket.

At the end of the sequence of numbers, we will have gaps in top of the buckets:

\begin{figure}[H]
  \centering
    \includegraphics{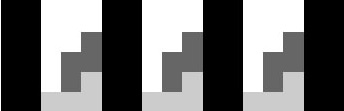}
  	\label{fig:end_numbers}
\end{figure}

To clear them we receive $s$ left guns to clear the gaps 

And finally to clear the board we receive $4 + T$  pieces of type "I" to clear the fill region and the board ("T" is the quantity that the numbers of each subset should add to).

To clear the fill region, we simply put the "I" pieces in vertical position, each one on top of another.

\section{Correctness of the reduction}

In the previous section, we have seen how to reduce a positive instance of 3-PARTITION to an instance of Tetris where the board can be cleared.

Now we are going to show that every negative instance of 3-PARTITION reduces to a game of Tetris where the board cannot be cleared. We will do this through a series of lemmas.

\begin{lema}
\label{one}
If we try to clear pieces of the fill space with other piece other than the "I" piece we will not be able to clear the board
\end{lema}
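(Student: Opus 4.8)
The plan is to show that a block can come to rest in the fill column only as part of a vertically oriented \texttt{I} piece, so that using any other piece there necessarily creates a permanent defect. First I would record the geometry of the initial board (Figure~\ref{fig:board}): the fill space is a single empty column of width one, flanked on both sides by separator columns filled with planted blocks. Clearing any row of the board requires that row to be completely filled, so in particular the cell of that row lying in the fill column must eventually be occupied by some piece.

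Next I would run through the seven tetriminoes and observe that the vertically oriented \texttt{I} is the only one whose bounding box has width one; every other piece, in each of its rotations, occupies cells in at least two consecutive columns. Hence, if any piece other than a vertical \texttt{I} places a block in the fill column, the same piece also places a block in one of the two neighbouring separator columns. Since those columns are already obstructed by planted blocks, the piece cannot fall all the way to the floor of the fill column; it must halt with its fill-column block higher up, leaving at least one empty cell directly below it in that column (in the extreme case the piece cannot be positioned over the fill column at all, which is only stronger).

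Then I would invoke the monotonicity of Tetris: pieces enter only from the top and fall, so an empty cell that has a filled cell above it in the same column can never afterwards be filled --- it is a permanent hole. The configuration forced above contains exactly such a hole in the fill column, so every row through that cell can never be completed and therefore never disappears; the board is never cleared. Clearing other lines in the meantime cannot help, since removing a completely filled row neither fills the hole nor exposes it to pieces arriving from above.

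The step I expect to be the main obstacle is the careful comparison of heights: one must check that when a non-\texttt{I} piece is forced onto the top of a separator, the cell it leaves open in the fill column genuinely lies below the top of that piece, so that it is a true hole and not merely a boundary cell; and one must rule out placements in which a flat edge of a gun or a snake might sit flush across the fill column together with a separator top, which requires matching the piece's lower profile against the height profile of the separators for each orientation. The width-one mismatch is what makes every such attempt fail, but it has to be spelled out case by case.
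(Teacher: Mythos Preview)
Your opening move---only the vertical \texttt{I} has a one-block-wide profile, so any other piece dropped over the fill column must also occupy a neighbouring separator column and hence comes to rest on top of a separator---is exactly the observation the paper makes. The divergence is in how you finish the argument.

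You close by appealing to ``monotonicity'': an empty cell with a filled cell above it in the same column is declared a permanent hole. That is not true in Tetris. If the row containing the covering block is itself completed and cleared, everything above it shifts down by one; when the offending piece has deposited only a single block in the fill column, that column is then open all the way down to the former hole, and a later vertical \texttt{I} can reach it. So the sentence ``removing a completely filled row neither fills the hole nor exposes it to pieces arriving from above'' is precisely where the argument breaks. To rescue your line you would have to show that the covering row can never be completed, and that inevitably pushes you back to a global count of blocks.

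The paper finishes with that count directly. Placing a non-\texttt{I} piece over the fill column blocks it, and unblocking requires clearing the obstructing row---an ``extra'' line beyond the $16+4T$ rows the construction is designed to fill. Since the piece sequence supplies exactly enough blocks for those $16+4T$ rows and no more, this is impossible. Replacing your monotonicity step with this ``just enough pieces'' counting argument closes the gap without any of the per-orientation case analysis you anticipated in your final paragraph.
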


\begin{proof}
Any other piece different than the I piece cannot go into the fill space without blocking it, because the I piece is the only piece that has a width of one block.
If we try to put another piece into it we will block the fill space and we will have to clear a line extra to "unlock" it.\\
As we have just enough pieces to clear the board, we will not be able to clear the board
\end{proof}

\begin{lema}
\label{two}
If we try to use a "I" piece to cover gaps of a bucket we will not we able to clear the board
\end{lema}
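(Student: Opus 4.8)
The plan is to combine a tight global cell‑counting argument with a short local case analysis of where an I‑piece could possibly come to rest in or above a bucket.

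First I would invoke Lemma~\ref{one}: the only admissible destination for an I‑piece is the fill space, and since that column has width one the piece must be dropped vertically. Then I would set up the budget. The reduction hands the player exactly $4+T$ I‑pieces, contributing $4(4+T)=16+4T$ filled cells, while the buckets have height $16+T\times 4$ and the separators are planted full; hence the fill column is the only part of the board with any slack, and it must accumulate exactly $16+4T$ cells over the course of the game for every horizontal line through it to become complete and disappear. The I‑piece budget is therefore exactly tight: diverting even one I‑piece away from the fill space leaves that column permanently short by four cells, so at least one board line can never be completed, and the board cannot be cleared. This already disposes of any play in which an I‑piece is simply ``spent'' on a bucket.

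Second I would rule out that an I‑piece placed in or over a bucket could be productive rather than wasted, by enumerating the possibilities. A horizontal I‑piece cannot lie inside a bucket, whose interior has width $3<4$; it can only settle on top of the planted separators or on top of whatever has been built, creating blocked cells exactly as in the proof of Lemma~\ref{one}, which forces an extra line clear that the piece inventory does not provide. A vertical I‑piece dropped into one of the three interior columns of a bucket raises that single column by four, but none of the remaining pieces in the sequence (the $s$ left guns reserved for the top gaps, after the T--T--right‑gun triples of every digit have been consumed) can raise the other two interior columns to match, so those rows can never be completed; and in this subcase the fill column is simultaneously four cells short, reinforcing the conclusion. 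The same reasoning applies if one tries to plug a bucket's top gap with an I‑piece instead of the intended left gun: the I‑piece cannot match the gap's shape and leaves blocked cells. In every case the board cannot be cleared.

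The main obstacle is the second part: being confident the enumeration of I‑piece landings near a bucket is exhaustive and that each one yields an \emph{unrecoverable} deficit rather than a transient irregularity that a later piece could absorb. This is exactly where the no‑slack structure of the construction carries the argument — every piece in the delivered sequence has a unique intended role and the total cell count is exactly balanced, so any deviation shows up as a permanent shortfall somewhere. I would make this rigorous by tracking, for the offending I‑piece, either the number of blocked cells it produces (compared against the extra line‑clears available, namely none) or the column‑height imbalance it creates inside a bucket (compared against the pieces still to come), in both cases reaching a contradiction with clearing the board.
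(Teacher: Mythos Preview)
Your first paragraph is essentially the paper's entire proof: the reduction supplies exactly $4+T$ I-pieces, the fill column needs exactly that many, and by Lemma~\ref{one} no other piece type can substitute there, so diverting even one I-piece to a bucket leaves the fill column permanently four cells short and the board cannot be cleared. The paper stops there.

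Your second paragraph---the local case analysis of horizontal versus vertical I-placements in or over a bucket, column-height imbalances, and whether the deviation is ``productive rather than wasted''---is not wrong, but it is unnecessary. Once the budget argument establishes that the fill column is short and that only I-pieces can serve it, the board is unclearable \emph{regardless} of what the misplaced I-piece accomplishes in the bucket. There is no scenario in which a productive bucket placement could compensate, because the shortfall is in a column that (by Lemma~\ref{one}) accepts no substitutes. So the enumeration you worry about in your final paragraph is not an obstacle at all: you can simply delete the second part and the proof is complete and matches the paper's.
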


\begin{proof}
Notice that we receive just enough "I" pieces to clear the fill part, if we try to use a "I" piece to cover gaps in other bucket, we will lack pieces to cover the fill space.
By lemma one we showed that if we try to cover the fill space with other pieces different than the "I" piece we will not be able to clear the board
\end{proof}

\begin{lema}
\label{three}
If there is ever placed a piece above the $16 + 4 \times T$ bottom lines we will not be able to clear the board
\end{lema}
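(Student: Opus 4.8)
The plan is to run a conservation-of-area argument: the initial board together with the whole incoming sequence of pieces supplies \emph{exactly} the number of blocks needed to fill rows $1,\ldots,16+4T$, and not one block more, so any block that is ever locked above that band is a block the band will forever be missing.

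I would first fix the bookkeeping. Let $W=4s+3$ be the width. From the construction in \cite{TeCon} I would read off $I_0$, the number of cells of rows $1,\ldots,16+4T$ that are already filled in the initial board: the $s$ full-height bucket separators give $s(16+4T)$, the planted right gun and left snake in each bucket give $8s$, and the two full-height separators bounding the width-one fill column give $2(16+4T)$, so $I_0=(s+2)(16+4T)+8s$. I would then count the total number $P$ of blocks delivered during the game: per number, one left gun, then two $T$ pieces and one right gun for each of its unary digits, then one right gun and one left snake to close; then $s$ further left guns for the gaps; then $4+T$ pieces of type $I$. Using $\sum_{i=1}^{3s}a_i=sT$ this gives $P=(36s+12sT)+4s+(16+4T)$, and a short calculation confirms
\[
P + I_0 = W \cdot (16+4T).
\]
In other words the empty cells of the bottom band number exactly $P$, and the $4+T$ vertical $I$-pieces dropped one on top of another into the width-one fill column clear precisely $4(4+T)=16+4T$ rows, leaving the top five rows — which are there only so the player has room to choose moves — empty for the whole game.

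Next I would pass to line-clear counting. In a successful game the board ends empty, so every block that ever appears is removed by exactly one line clear, and each line clear removes $W$ blocks; hence the total number $L$ of line clears satisfies $W\cdot L=P+I_0$, i.e. $L=16+4T$. Thus the piece supply buys exactly enough clears to empty the bottom band through the $I$-pieces and nothing over — no clear to spare for a row above the band, and none to spare for repairing a mistake. Now suppose a piece is ever locked with a block in one of the five rows above the band. Those four (or more) block-placements complete no row of the bottom band, so by the tight identity $P+I_0=W\cdot(16+4T)$ the band is afterwards short by at least four cells; since a line clear above the band never occurs and the piece supply is by then exhausted, those cells are never filled, so some row among $1,\ldots,16+4T$ is never completed, never cleared, and the board is not cleared.

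The step I expect to be the genuine obstacle is making ``never filled'' airtight, because a line clear shifts the board downward, so being ``above the $16+4T$ bottom lines'' is a statement in a moving frame, and one must rule out that the player later ``repairs'' a high placement — for instance by letting the stray block sink back into the band before it is cleared. I would handle this by reasoning only with the two quantities invariant under shifting, namely the running block total and the forced value $L=16+4T$, and by showing that any turn which locks a piece partly above the current band either wastes four blocks outright (breaking the identity above) or, after all later clears, leaves a filled cell sitting over an empty cell of the final bottom configuration, i.e. a hole that no remaining piece can reach. Pinning down this last alternative — that a misplaced high block must strand such a hole given the rigid ``open bucket / drop number / close bucket'' protocol that the geometry forces, and that neither an $I$-piece nor any other piece can be spared to patch it — is exactly where Lemmas \ref{one} and \ref{two} are used, and it is the part I would develop most carefully.
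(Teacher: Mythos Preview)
Your approach is the paper's approach: a conservation argument showing that the piece supply together with the initial fill buys exactly $16+4T$ line clears, so a block locked above that height cannot be absorbed. You carry out the bookkeeping explicitly (and correctly: $P+I_0=(4s+3)(16+4T)$), whereas the paper simply asserts ``we have just enough pieces to clear the board'' and moves on. The one substantive difference is how the moving-frame worry you raise is dispatched. The paper does not call on Lemmas~\ref{one} or~\ref{two}; instead it uses a single observation about Tetris gravity --- a locked block never drops into an empty gap beneath it, it only descends when a full row below it is cleared --- which yields directly that the maximum occupied row index decreases by at most one per line clear. Hence a block sitting at row $r>16+4T$ forces at least $r>16+4T$ further clears, contradicting the exact budget $L=16+4T$ you derived. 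This short-circuits your more elaborate repair plan; the gravity remark is the missing ingredient that makes the last paragraph of your proposal unnecessary.
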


\begin{proof}
Notice that we have just enough pieces to clear the board, this is independent of the 3-PARTITION instance being positive or negative.

Notice too, that in Tetris once a piece has been fixed into a stable place, it will only fall down if the pieces behind it are cleared, it will not fall otherwise, even if there are gaps that are not filled under it

So, if we have a piece above of the height of the buckets, we will not be able to clear the board, because we will have to clear more lines than $16 + 4 \times T$ and as we have just enough pieces to clear the buckets, we will not be able to clear the board
\end{proof}

\begin{lema}
\label{four}
If some gaps that cannot be covered by other piece different than the "I" piece are created, we will not be able to clear the board
\end{lema}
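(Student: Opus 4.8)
The plan is to derive Lemma~\ref{four} as a straightforward consequence of Lemmas~\ref{one}, \ref{two} and~\ref{three}, together with the global bookkeeping fact already used in the proof of Lemma~\ref{three}: the sequence of pieces handed to the player contains exactly as many blocks as are needed to fill and clear the board, with no surplus. First I would pin down precisely what kind of gap is meant: a hole that no tetrimino other than a vertically oriented ``I'' can cover is a column of width one whose only access is from directly above, i.e.\ a vertical channel of depth at least two whose two side walls are already occupied. Every other tetrimino dropped toward such a channel either does not fit at all or leaves at least one block protruding, which merely produces another uncoverable gap one row higher; so the only way to eventually fill the channel is with a vertical ``I''.

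Next I would split into two cases according to what the player does with such a gap. In the first case the gap is never filled: then every board line passing through it is permanently incomplete and hence never cleared, so either that line lies among the bottom $16 + 4\times T$ rows and the board is not emptied, or it lies above them and Lemma~\ref{three} already gives the conclusion. In the second case the player does fill the gap; by the previous paragraph this forces the use of one of the $4+T$ ``I'' pieces. But by Lemma~\ref{two} (and ultimately Lemma~\ref{one}) all $4+T$ of these pieces are exactly what is required to fill the width-one fill region, so spending one of them on the stray channel leaves the fill region to be completed by some non-``I'' tetrimino, which by Lemma~\ref{one} makes the board impossible to clear. In either case the board cannot be cleared, which is the claim.

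The step I expect to be the main obstacle is making the ``exactly enough pieces, no surplus'' accounting fully rigorous: one must check that clearing a line never effectively returns a usable ``I'' piece to the player (cleared blocks leave the game, they are not recycled) and that no clever placement can make the same $4+T$ vertical ``I'' pieces serve both the fill region and a stray channel. A block count of the fill region, combined with Lemma~\ref{three} (which already forbids stacking anything above the buckets, and hence forbids ``borrowing'' height to repack the region), should close this, but I would state it explicitly rather than wave it through, since the whole correctness argument rests on this tight inventory.
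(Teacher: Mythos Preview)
Your proposal is correct and follows essentially the same route as the paper: invoke Lemma~\ref{two} to rule out spending an ``I'' on the gap, then use the tight block count together with Lemma~\ref{three} to conclude. Your explicit case split (gap left unfilled versus gap filled by an ``I'') and the careful discussion of the no-surplus accounting are more detailed than the paper's two-line argument, but the underlying logic is the same.
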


\begin{proof}
As seen in lemma two, if we try to use an "I" piece to clear gaps of the buckets, we will not be able to clear the board.

As we have just enough pieces to fill the buckets, if we create gaps that cannot be filled, some buckets will have pieces above the top of the buckets.
And by lemma three, we will not be able to clear the board 
\end{proof}

\begin{lema}
\label{five}
If we put a "right gun" piece inside a closed bucket we will not be able to clear the board
\end{lema}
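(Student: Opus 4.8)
The plan is to show that a \emph{right gun} dropped onto a bucket that is already capped cannot come to rest in any position from which the board is still clearable, so that the statement reduces to Lemmas \ref{three} and \ref{four}.

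First I would recall the exact top profile of a closed bucket, as drawn in Figure \ref{fig:close_bucket}: once the closing right gun and the left snake have been planted, the three interior columns of the bucket are completely filled up to height $16 + 4\times T$ and the separator column sits one block higher, so the bucket presents a fixed silhouette with no width‑one cavity into which new material could descend. In particular there is no legitimate way for a freshly spawned right gun to slide down \emph{into} the interior of a closed bucket; it can only come to rest on top of this silhouette, possibly straddling the adjacent separator or the fill column.

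Next I would run the short case analysis over the four rotations of the right gun and its horizontal offsets above the closed bucket. In each case I would pin down the row occupied by every block of the piece after it falls. Whenever the placement straddles the separator the piece obviously protrudes above row $16 + 4\times T$; whenever it is offset so as to sit as low as possible, the hook of the right gun necessarily leaves a one‑wide vertical gap against the separator (or against the fill column) that, by the shape of the right gun, can afterwards be reached only by an "I" piece. The first situation is excluded by Lemma \ref{three}, since a block above the $16 + 4\times T$ bottom lines already makes the board unclearable; the second is excluded by Lemma \ref{four} together with Lemma \ref{two}, because the supply of "I" pieces is exactly what the fill region needs and none can be spent plugging such a gap. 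Either way the board cannot be cleared, which is the claim.

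The step I expect to be the main obstacle is purely the bookkeeping of this case analysis: making the enumeration of rotations and offsets genuinely exhaustive and computing precisely which row each block of the right gun lands in for every offset, so that the appeal to Lemma \ref{three} is airtight and the appeal to Lemma \ref{four} is justified by an honestly uncoverable gap rather than one that some later piece could in principle reach. The geometric core — that a closed bucket has no narrow interior cavity — is immediate from the construction and does most of the conceptual work.
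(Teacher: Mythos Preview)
Your overall strategy is exactly the paper's: enumerate every way the right gun can come to rest over a closed bucket and observe that each placement produces an uncoverable gap, then invoke Lemma~\ref{four}. The paper carries this out by exhibiting the six admissible placements in a figure and noting that all of them create gaps that cannot be filled; Lemma~\ref{three} is not used at all in this step.

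The one concrete issue in your plan is the description of the closed-bucket silhouette. A closed bucket is \emph{not} ``completely filled up to height $16+4\times T$'' across its three interior columns. The closing sequence plants a right gun and then a left snake, reproducing the same notched top profile the bucket had initially --- precisely the profile into which a left gun fits perfectly to ``open'' the bucket (this is also why, at the very end, $s$ left guns suffice to flatten all the bucket tops before the I-pieces come). So the relevant case analysis takes place over a genuinely non-flat top, and the six placements the paper enumerates are placements \emph{onto that notch}, not onto a level surface. Your dichotomy ``either it protrudes above row $16+4T$, or it leaves a width-one gap against a separator'' is therefore built on the wrong picture; with the correct silhouette every placement simply leaves an unfillable gap inside the bucket's own footprint, and Lemma~\ref{four} alone finishes the argument. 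Fix the top profile and your enumeration will line up with the paper's.
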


\begin{proof}

We will show this with a figure:

\begin{figure}[H]
  \centering
    \includegraphics[scale=0.5]{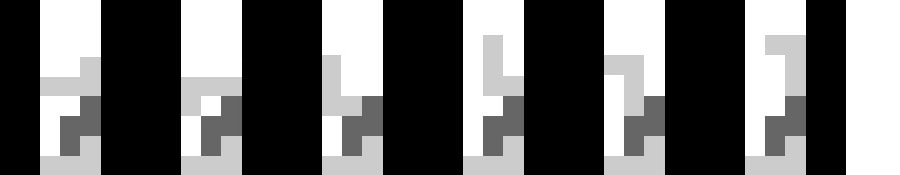}
  \label{fig:RG_closed}
\end{figure}

This figure shows the six different ways a right gun can be placed inside a closed bucket

As can be seen in the figure, if we put a right gun in top of a closed bucket we will create gaps that cannot be filled. And as it was shown in lemma four, we will not be able to clear a board with gaps that cannot be filled

\end{proof}

\begin{lema}
\label{six}
If we put a "T" piece inside a closed bucket we will not be able to clear the board
\end{lema}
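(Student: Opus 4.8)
The plan is to mirror the structure of Lemma~\ref{five}: exhibit, via a figure, every essentially distinct way a ``T'' piece can come to rest once a bucket has been closed, and argue in each case that the placement forces an uncoverable gap, so that Lemma~\ref{four} applies. First I would recall the geometry of a closed bucket: a bucket has interior width three, a left separator, and on top it is sealed by the right gun / left snake cap described in Section~3, so its upper profile when closed is flat except for the fixed notches left by that cap. A ``T'' tetromino in its four rotations occupies either a $3\times 2$ footprint (three-in-a-row with a bump) or a $2\times 3$ footprint (three-in-a-column with a side bump); I would enumerate how each of these can land on the closed profile, using the width-three constraint to cut down the cases, exactly as the companion figure for the right gun does.

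The key steps, in order, are: (i) fix the coordinate conventions and redraw the closed-bucket profile so the reader can check the case list; (ii) list the candidate landing positions of a ``T'' — horizontally flush on the cap, horizontally straddling the separator, each of the two vertical orientations against the separator wall, and the ``pointing up'' versus ``pointing down'' variants — pruning those that are geometrically impossible given width three and the notch pattern of the cap; (iii) for each surviving case, point to the cell (or cells) that becomes enclosed: either a hole directly beneath the ``T'''s bump, or a one-wide column trapped between the ``T'' and the separator, neither of which any non-``I'' piece can reach and which Lemma~\ref{two} forbids filling with an ``I''; (iv) conclude that in every case a gap of the type covered by Lemma~\ref{four} is created, hence the board cannot be cleared. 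I would package (ii)–(iii) into a single labelled figure, as the author does throughout, with the prose merely asserting that the figure is exhaustive.

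\begin{figure}[H]
  \centering
    \includegraphics[scale=0.5]{cerrado_T_definitivo.jpg}
  \label{fig:T_closed}
\end{figure}

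The main obstacle is genuinely the case analysis rather than any single hard inequality: one must be confident the figure is \emph{complete}, i.e.\ that no rotation or horizontal offset of the ``T'' has been overlooked, and in particular that the interaction with the left separator and with the notches of the closing cap does not accidentally produce a flush fit with no trapped cell. I expect the ``T pointing downward onto the flat cap'' orientation to be the delicate one, since its three-wide base could in principle sit flush; here I would use that the cap's profile (from the right gun plus left snake) is \emph{not} flat across all three interior columns, so even that orientation leaves a hole. A secondary subtlety is that a ``T'' placed partly over the separator spills into the neighbouring bucket's region, and one must note that this is itself a fatal ``piece above the buckets'' situation covered by Lemma~\ref{three}. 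Once those corner cases are checked against the figure, the lemma follows immediately by appeal to Lemmas~\ref{two}, \ref{three} and~\ref{four}.
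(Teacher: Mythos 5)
Your proposal matches the paper's proof in both structure and substance: the paper likewise enumerates the six possible placements of the ``T'' in a single figure, concludes that all but one create gaps that cannot be filled (invoking Lemma~\ref{four}), and handles the one exceptional placement --- which leaves a gap fillable only by an ``I'' piece --- by appealing to Lemma~\ref{two}, exactly as your step~(iii) anticipates. The only cosmetic difference is your extra remark about placements spilling over the separator being covered by Lemma~\ref{three}, which the paper does not bother to state.
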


\begin{proof}

We will show this with a figure:

\begin{figure}[H]
  \centering
    \includegraphics[scale=0.5]{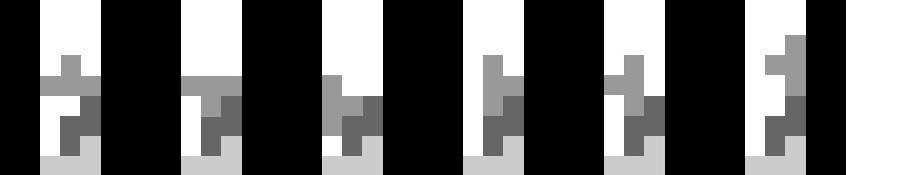}
  \label{fig:T_closed}
\end{figure}

This figure shows the six different ways a T can be placed inside a closed bucket

If we look at the four way to put the T inside the bucket (the fourth one starting from the left), we see that the piece doesnt create gaps that cannot be filled, but it creates gaps on the left. The only way to clear that space without creating gaps that cannot be filled later on, is with a "I" piece.
But as we showed in lemma two, if we try to cover gaps with the I piece, we will not be able to clear the board

In the rest, we will create gaps that cannot be filled. And as it was shown in lemma four, we will not be able to clear a board with gaps that cannot be filled

\end{proof}

\begin{lema}
\label{seven}
If we put a "left snake" piece inside a closed bucket we will not be able to clear the board
\end{lema}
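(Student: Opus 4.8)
\textbf{Proof proposal for Lemma~\ref{seven}.} The plan is to mirror the case analyses used in Lemmas~\ref{five} and~\ref{six}: exhibit a figure showing every way a left snake can come to rest inside a closed bucket, and then dispatch each case by appealing to the gap lemmas already proved. First I would fix the geometry. A closed bucket has interior width three, and at the moment a stray piece is dropped the bucket is already capped by its right-gun/left-snake lid, so the relevant landing surface is essentially flat (or has the small step coming from the lid). A left snake has two distinct orientations (one occupying three columns and two rows, one occupying two columns and three rows), and in a width-three bucket each orientation admits only a couple of horizontal offsets; enumerating these gives the finite list of placements — I expect the same count of genuinely distinct configurations that the author draws for the right gun and the T.

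Next, for each placement I would classify the resulting hole pattern into one of three types. Type (a): the snake's overhang leaves a covered empty cell that no tetromino can reach — this is exactly the hypothesis of Lemma~\ref{four}, so the board cannot be cleared. Type (b): the snake sits without an immediate covered hole but leaves a one-cell-wide vertical channel on the left (or right) side of the bucket; the only piece that can fill such a channel without in turn creating a covered hole is the \lq\lq I\rq\rq\ piece, and Lemma~\ref{two} (via Lemma~\ref{one}) forbids spending an \lq\lq I\rq\rq\ piece there. Type (c): the snake raises the stack so that a block ends up above the $16 + 4\times T$ bucket height, which is ruled out by Lemma~\ref{three}. I would annotate each sub-picture with the type it falls into, so the reader can check the classification directly against the figure.

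The step I expect to be the main obstacle is identifying and handling the \lq\lq near-miss\rq\rq\ placement — the analogue of the fourth T-placement in Lemma~\ref{six} — where the left snake almost tiles cleanly and does not create a covered hole outright. Here the argument is slightly more delicate: I must verify that the leftover region really is a width-one channel (so that only an \lq\lq I\rq\rq\ piece fits) rather than something a right gun, left snake, or T could still patch, and then invoke Lemma~\ref{two}. Once that case is pinned down, the remaining placements are routine: each visibly produces an overhang with an unreachable cell underneath, so Lemma~\ref{four} closes them immediately, and the lemma follows by exhausting the finite list.
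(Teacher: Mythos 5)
Your proposal follows essentially the same route as the paper: enumerate the finitely many resting positions of the left snake in a width-three closed bucket, dispatch the placements that create covered holes via Lemma~\ref{four}, and handle the one ``near-miss'' placement (which leaves a width-one channel on the left) by observing that only an ``I'' piece could fill it, which Lemma~\ref{two} forbids. The only minor discrepancy is the expected count of placements --- the paper finds three for the left snake rather than six as for the right gun and the T --- but since you hedged on that number and your classification covers all cases, this does not affect the argument.
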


\begin{proof}

We will show this with a figure:

\begin{figure}[H]
  \centering
    \includegraphics[scale=0.5]{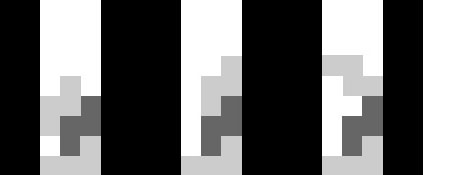}
  \label{fig:snake_closed}
\end{figure}

This figure shows the three different ways a left snake can be placed inside a closed bucket

If we look at the second way to put the snake inside the bucket, we see that the piece doesnt create gaps that cannot be filled, but it creates gaps on the left. The only way to clear that space without creating gaps that cannot be filled later on, is with a "I" piece.
But as we showed in lemma two, if we try to cover gaps with the I piece, we will not be able to clear the board

In the rest of ways to put the piece, we will create  gaps that cannot be filled. And as it was shown in lemma four, we will not be able to clear a board with gaps that cannot be filled

\end{proof}

\begin{lema}
\label{eight}
If we try to put the pieces of a digit in a way that is different that the one we specified in the previous section, we will create  gaps that cannot be filled
\end{lema}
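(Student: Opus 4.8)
The plan is to turn the statement into a bounded case analysis. Because this variant only allows a piece to be rotated and shifted while it is still in the top row and then dropped straight down, inside the width-three interior of a bucket each of the three pieces that encode one digit — a ``T'', a second ``T'' and a ``right gun'', in the prescribed order — has only a constant number of legal landings: at most the four rotational states, each pushed against the left separator, centred, or pushed against the right wall, discarding those that do not fit in three columns. We need only treat landings that keep all three pieces inside the bucket opened by the preceding ``left gun'': a digit piece dropped into the fill region is dealt with by Lemma \ref{one}, and a digit piece dropped into an already-closed bucket is a ``right gun'', a ``T'' or a ``left snake'' in a closed bucket, hence dealt with by Lemmas \ref{five}, \ref{six} and \ref{seven}. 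Finally we may assume that, just before the digit is placed, the three columns of the bucket have equal height (this holds right after the opening step, and it is exactly the invariant the canonical placement of Section 3 preserves); if the top were instead uneven or already contained a buried cell, the lemma applied to whatever produced that state has already finished the argument.

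With this in place I would expand the three-level tree of landings — first ``T'', then second ``T'', then ``right gun'' — pruning as I go. The single geometric fact that does the work at every node is: once an empty cell inside a bucket has a block above it, that cell is lost, because clearing the line it lies on would require the two neighbouring columns to be filled flush up to that height, each of those columns is a width-one slot, the only tetromino that fits a width-one slot is the ``I'', and by Lemma \ref{two} an ``I'' spent on a bucket exhausts the supply reserved for the fill region. So any landing that buries a cell is immediately a ``gap that cannot be filled''. The remaining landings are those leaving the three-column top with no buried cell; for these I would use two shortcuts to keep the casework short — (i) a single ``T'' and a single ``right gun'' can each absorb only a one-step change of height on a given side, so a profile whose column heights differ by two or more, or that contains a notch, cannot be repaired by the pieces still owed to this digit; and (ii) the ``right gun'' is the last piece of the digit, so it alone must restore equal column heights, which pins the profile left by the two ``T''s to a single shape and therefore pins both ``T'' placements. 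The only surviving leaf is the canonical placement.

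The main obstacle is not the size of this tree (a few hundred configurations, tamed by the two shortcuts) but ruling out \emph{repair sequences}: a deviation at one digit that buries nothing yet, in the hope that a later digit, or the closing ``right gun'' plus ``left snake'', straightens things out. Here one notes that a digit's three pieces add exactly twelve cells spread over three columns, and the heart of the argument is a second bounded case analysis — over the uneven profiles a digit's pieces can actually reach, and, from each such profile, over what the next digit's three pieces (or, at the top of the bucket, the closing ``right gun'' and ``left snake'', which seat flush only on an equal-height top) can do with it — showing that every such chain is forced, within a bounded number of digits, to bury a cell, which as above is a gap that cannot be filled. Carrying this out is where the real effort lies; once it is done we conclude that every placement of a digit's pieces other than the one specified in Section 3 creates gaps that cannot be filled, which is the assertion of the lemma (and, combined with Lemma \ref{four}, shows such a play cannot clear the board).
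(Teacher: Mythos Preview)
Your plan is essentially the paper's own argument: an exhaustive case analysis of all landings of the three digit pieces inside an opened bucket, pruning each branch that buries a cell. The paper carries this out purely through figures --- six landings for the first \(T\), six for the second \(T\), and twelve for the right gun (six on top of each surviving second-\(T\) placement) --- and observes that exactly one path through the tree, the canonical one from Section~3, leaves no buried cell. Your framing with the width-one-slot/\(I\)-piece observation and the appeal to Lemmas~\ref{five}--\ref{seven} for placements in closed buckets is the explicit justification the paper leaves to the pictures.

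Two remarks. First, your worry about \emph{repair sequences} is more than the paper needs, and more than you need. The single-digit case tree already terminates in a dichotomy: after the three pieces are down, either a cell is buried or the bucket top is back to the invariant ``opened'' profile. There is no third outcome --- no unburied-but-uneven profile surviving all three placements --- so the cross-digit second case analysis you sketch in your last paragraph never arises. If you carry out the tree carefully you will see this; the paper's figures are exactly this claim. Second, be careful with the invariant you assume at the start of a digit: the opened-bucket profile is the specific shape produced by dropping the opening left gun onto the planted right-gun-plus-left-snake, and the canonical digit placement reproduces that same profile four rows higher. Whether that profile is literally ``three columns of equal height'' depends on the construction; state the invariant as ``the opened-bucket profile of Section~3'' rather than guessing its shape, and your induction goes through unchanged. (Minor slip: a digit contains no left snake, so Lemma~\ref{seven} is not needed here --- Lemmas~\ref{five} and~\ref{six} cover the closed-bucket cases for digit pieces.)
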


\begin{proof}

We will show this through a series of figures, each depicting all the ways to put a piece of the digit:

\begin{figure}[H]
  \centering
    \includegraphics[scale=0.5]{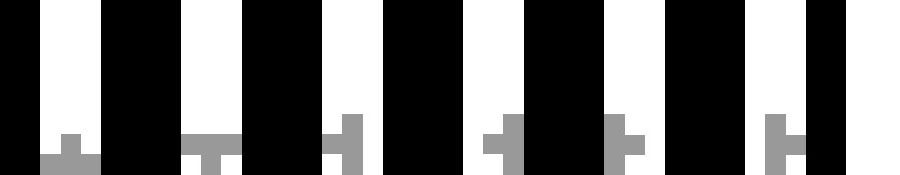}
  \label{fig:first_piece_digit_1}
\end{figure}

This figure shows the six different ways in which the first piece can be placed, as can be seen in the figure, only the first one doesnt create gaps.
And as was shown in lemma six, we cannot put this piece in a closed bucket, without creating gaps.

Now, we are going to see the different ways of placing the second piece of the digit, we will do this with another figure:

\begin{figure}[H]
  \centering
    \includegraphics[scale=0.5]{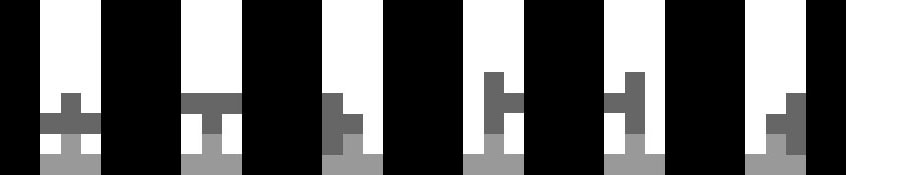}
  \label{fig:second_piece_digit_1}
\end{figure}

This figure shows the six different ways in which the second T that belongs to the digit can be placed in the bucket, as is shown in the figure, only the third from the left and the sixth from the left doesnt create gaps.
But as we will see later, only in the third way, can the right snake by placed without creating gaps.
And as was shown in lemma six, we cannot put this piece in a closed bucket, without creating gaps.

Finally we are going to show that the right gun can only be put inside the opened bucket only if we have placed the second piece in the correct way.

We will do this with two figures, one for every way to put the second piece:

\begin{figure}[H]
  \centering
    \includegraphics[scale=0.5]{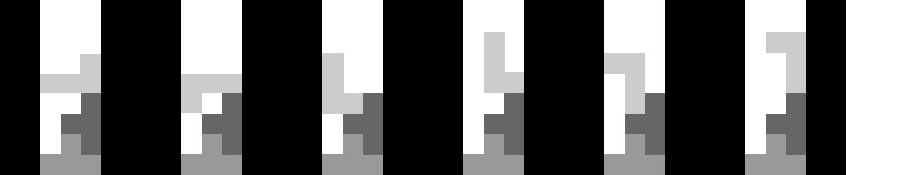}
  \label{fig:third_piece_digit_1}
\end{figure}

This figure shows the first six different ways in which the third piece in the digit can be placed in the bucket. As is shown in the figure,we always create gaps
And as was shown in lemma five, we cannot put this piece in a closed bucket, without creating gaps.

\begin{figure}[H]
  \centering
    \includegraphics[scale=0.5]{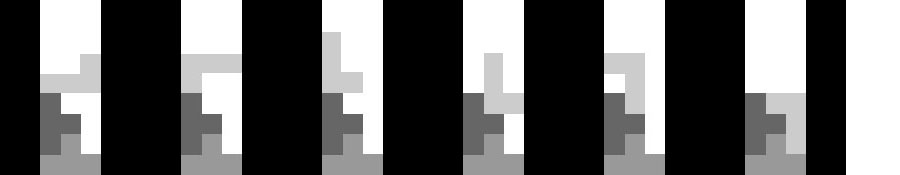}
  \label{fig:fourth_piece_digit_1}
\end{figure}

This figure shows the last six different ways in which the third piece in the digit can be placed in the bucket. As is shown in the figure, the only way to put the piece in the bucket without creating gaps is the sixth one from the left. An as can be seen, it coincides with the way to place the pieces of a digit that we showed in the third section
And as was shown in lemma five, we cannot put this piece in a closed bucket without creating gaps.

\end{proof}

\begin{lema}
\label{nine}
If we try to close the bucket in a way that is different that the one we specified in the previous section, we will create  gaps that cannot be filled
\end{lema}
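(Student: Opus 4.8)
The plan is to prove this exactly in the style of Lemma~\ref{eight}: by an exhaustive, figure-driven case analysis over the two pieces that close a bucket, namely the right gun followed by the left snake placed on top of it. First I would dispose of the degenerate possibility that one of these closing pieces is dropped into a bucket that is already sealed; Lemmas~\ref{five} and~\ref{seven} already cover that situation, so from now on we may assume the pieces are being placed on the single currently opened bucket.

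Next I would enumerate all placements of the right gun in the opened bucket --- I expect six essentially different positions and orientations, to be displayed in a figure. In five of them the piece leaves a hole that only an ``I'' piece could ever cover, so Lemma~\ref{four} finishes the argument; the remaining candidate leaves a column of gaps on the left that an ``I'' piece could in principle fill, but by Lemma~\ref{two} spending an ``I'' piece there costs us the board. Hence the right gun must be placed in the unique way described in Section~3. Conditioning on this correct placement, I would then enumerate the three placements of the left snake on top of it in a second figure: all but one create gaps coverable only by an ``I'' piece (again Lemmas~\ref{four} and~\ref{two}), and the surviving placement is precisely the one shown in Figure~\ref{fig:close_bucket}, which leaves the bucket flush and sealed so that the next left gun can open a fresh bucket. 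Any deviation therefore produces gaps that cannot be filled, which is the statement of the lemma.

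The main obstacle I anticipate is purely combinatorial: checking that the two case figures are genuinely exhaustive, and in particular correctly classifying the borderline placements whose only conceivable repair is an ``I'' piece --- these must be excluded through Lemma~\ref{two}, not Lemma~\ref{four}. One must also verify the interaction with the small left-hand gap left behind when the last digit of the number was placed, so as to be certain that the combined empty region is truly uncoverable by any non-``I'' piece.
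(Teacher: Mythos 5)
Your overall strategy --- exhaustive enumeration of placements of the right gun and then the left snake in the opened bucket, with Lemmas~\ref{five} and~\ref{seven} disposing of placements into already-closed buckets and Lemmas~\ref{four} and~\ref{two} killing the bad cases --- is exactly the paper's. But there is a concrete gap in your first pruning step. You claim that five of the six right-gun placements immediately create gaps coverable only by an ``I'' piece, leaving a unique surviving candidate. The paper's case analysis finds instead that \emph{three} of the six placements (the fourth, fifth and sixth in its figure) are ``valid'' at that stage, i.e.\ do not by themselves create unfillable gaps. So the right gun is not pinned down by the first piece alone, and the paper must carry all three surviving placements forward: it then shows, via three further figures, that placing the left snake on top of the fourth or fifth right gun always creates gaps that cannot be filled, and only the sixth right-gun placement admits a left-snake placement that closes the bucket cleanly (the one from Section~3).

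As written, your plan conditions the left-snake enumeration on a single right-gun position and therefore leaves two whole families of closing sequences unexamined; those are precisely the cases where the deviation only becomes fatal at the second piece. The repair is straightforward but necessary: extend your second enumeration so that the left snake is considered on top of \emph{each} right-gun placement that survives the first round, not just the intended one. Your closing remark about checking the interaction with the leftover left-hand gap from the last digit is well taken and is implicitly what forces the fourth and fifth right-gun placements to fail only at the snake stage.
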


\begin{proof}

We will show this through a series of figures, each depicting all the ways to put a piece of the closing sequence:

\begin{figure}[H]
  \centering
    \includegraphics[scale=0.5]{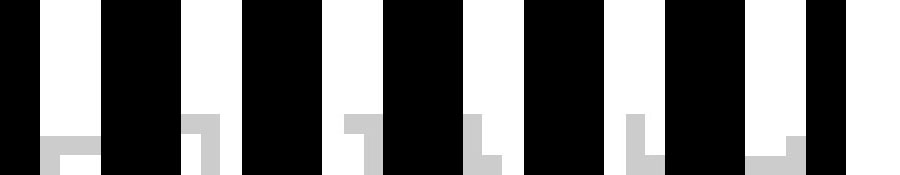}
  \label{fig:first_piece_ending}
\end{figure}

As shown in the figure, the fourth, fifth and sixth way to put the first piece are valid.
And as was shown in lemma five, we cannot put this piece in a closed bucket, without creating gaps.

Now, we are going to show that if we put the left snake in top of the fourth and fifth "right guns" we will create gaps that cannot be filled.

We will do this through a series of figures.

\begin{figure}[H]
  \centering
    \includegraphics[scale=0.5]{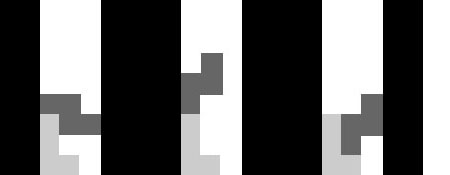}
  \label{fig:second_piece_ending}
\end{figure}

\begin{figure}[H]
  \centering
    \includegraphics[scale=0.5]{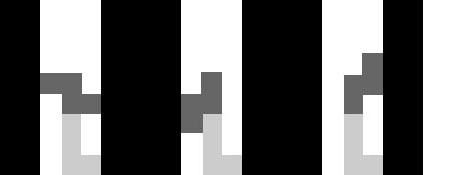}
  \label{fig:third_piece_ending}
\end{figure}

\begin{figure}[H]
  \centering
    \includegraphics[scale=0.5]{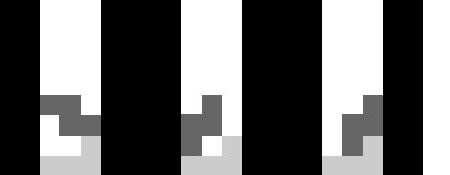}
  \label{fig:fourth_piece_ending}
\end{figure}

As can be seen on the figures above, only the last way to place the right gun and the left snake are the only way to close the bucket without creating gaps.The same that we shown in section three

And as was shown in lemma seven, we cannot put the left snake in a closed bucket, without creating gaps.

\end{proof}

\begin{lema}
\label{ten}
There is only one way to open a bucket without creating gaps
\end{lema}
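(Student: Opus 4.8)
The plan is to mirror the case-analysis strategy used in Lemmas~\ref{five} through~\ref{nine}: enumerate every way a ``left gun'' piece can be dropped onto a fresh (not-yet-opened) bucket and show that all but one of these placements create gaps that cannot later be filled, while the surviving placement is exactly the ``open a bucket'' configuration depicted in Figure~\ref{fig:open}.

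First I would fix the relevant local geometry. A not-yet-opened bucket has a width-three interior, bounded on the left by a separator, with a planted right gun and a planted left snake sitting on top of it, so the surface that the incoming left gun sees is a fixed, small profile. The left gun has four rotational states, and for each state only a bounded number of horizontal offsets keep the piece over the bucket: offsets that push blocks over the separator or outside the bucket are either illegal or immediately leave a block resting above the $16 + 4\times T$ bottom lines. This produces a finite list of candidate placements — a single figure, as in the earlier lemmas — that can be inspected one by one.

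Next, for each candidate I would examine the cells it leaves empty beneath or beside it. In every case except one, the placement either leaves a one-wide hole that no piece other than the ``I'' piece can reach (excluded by Lemmas~\ref{two} and~\ref{four}), or leaves part of the left gun protruding above the top of the bucket (excluded by Lemma~\ref{three}), or effectively lands on a still-closed region and reduces to the situation of Lemma~\ref{five}. The one remaining placement is the intended one, and one checks directly that after it the bucket is genuinely ``open'', with a flat gap-free interior ready to receive the digit pieces.

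The main obstacle I expect is exhaustiveness rather than any single hard computation: one must be careful to list all four rotations of the left gun together with every horizontal shift that is geometrically possible given the separator and the planted right-gun/left-snake pair, and to argue that these planted pieces pin down the drop height in each case, so that no ``lucky'' placement slips through the enumeration. Once the list is complete, each bad case is dismissed by quoting one of Lemmas~\ref{two}, \ref{three}, \ref{four}, or~\ref{five}, exactly as in the preceding proofs, and the lemma follows.
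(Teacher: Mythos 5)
Your proposal matches the paper's proof in both substance and structure: the paper likewise disposes of this lemma by exhibiting the finite list of left-gun placements on a closed bucket (six in its figure) and observing that exactly one of them avoids creating gaps, namely the opening configuration of Figure~\ref{fig:open}. Your additional care about enumerating rotations and offsets exhaustively, and about dismissing each bad placement via Lemmas~\ref{two}--\ref{five}, only makes explicit what the paper leaves to inspection of the figure.
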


\begin{proof}
In the following figure we show the six different ways of placing a left gun in a closed bucket.
Only one opens the bucket, the one that doesnt create gaps (the sixth one from the left)

\begin{figure}[H]
  \centering
    \includegraphics[scale=0.5]{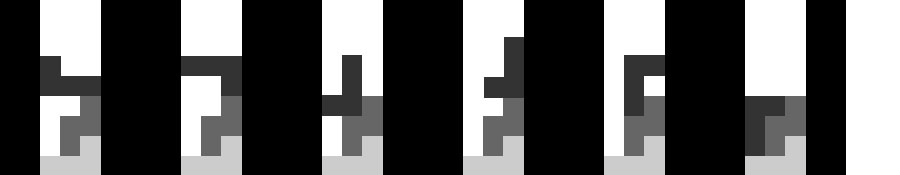}
  \label{fig:open_bucket_piece}
\end{figure}

\end{proof}

\begin{teorema}
\label{ending}
Clearing the board in this Tetris variant is NP-hard
\end{teorema}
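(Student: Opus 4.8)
The plan is to show that the construction of Sections 3 and 4 --- sending a 3-PARTITION instance $(A,B)$ to the described initial board together with its prescribed piece stream --- is a polynomial-time many-one reduction from 3-PARTITION, which is NP-hard by Theorem \ref{partition}. First I would verify that the reduction is polynomial-time computable. This is exactly where it matters that 3-PARTITION is NP-complete \emph{in the strong sense}: we may assume the $a_i$ and $B$ are bounded by a polynomial in the input size, so $T=B$ is polynomially bounded, the board has polynomial dimensions $(4s+3)\times(16+4T+5)$, and the piece stream --- one left gun per number, two T's and a right gun per digit, a right gun and a left snake to close each bucket, then $s$ left guns and $4+T$ vertical I-pieces --- has length $O(s+sB)$. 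Hence the Tetris instance is produced in polynomial time.

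The "completeness" direction is already established in Section 3: given a 3-PARTITION solution $A_1,\dots,A_s$, the explicit move sequence there (for each number $a_i$ in turn, open the bucket assigned to it, place its $a_i$ digit gadgets, close the bucket; then clear the top gaps with the $s$ left guns and fill the fill region with the $4+T$ I-pieces) leaves the bottom $16+4T$ rows completely filled, so they all clear and the board empties. Thus every yes-instance maps to a clearable board.

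The heart of the argument, and the step I expect to be the main obstacle, is "soundness": if $(A,B)$ is a no-instance then no sequence of moves clears the board. Lemmas \ref{one}--\ref{ten} are exactly what is needed to collapse the player's choices. By Lemmas \ref{one} and \ref{two} every I-piece must be dropped vertically into the fill region; by Lemmas \ref{five}, \ref{six}, \ref{seven} nothing may be placed into an already closed bucket, and by Lemma \ref{ten} a left gun can only re-open a bucket in the single gap-free way; by Lemma \ref{eight} the three pieces of each digit must take their canonical configuration and by Lemma \ref{nine} each bucket must be closed canonically; and Lemmas \ref{three} and \ref{four} say that any piece placed above height $16+4T$, or any gap that no piece can later cover, is fatal because the piece budget is exactly tight. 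Consequently, in any play that has not already lost, the sole remaining freedom is which (currently open) bucket receives the $a_i$ digit gadgets of each number $a_i$; each such gadget raises that bucket's filled height by $4$, while the planted pieces together with the opening and closing gadgets account for the other $16$ rows. So if $t_j$ denotes the sum of the numbers routed to bucket $j$, that bucket ends filled to height exactly $16+4t_j$; since $\sum_j t_j=\sum_i a_i=sB=sT$, clearing all $16+4T$ bucket rows forces $t_j=T=B$ for every $j$ (Lemma \ref{three} rules out $t_j>T$, Lemma \ref{four} rules out $t_j<T$). But a routing with all $t_j=B$ is precisely a solution of the 3-PARTITION instance, which does not exist. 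Hence every play loses, and the board cannot be cleared.

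Combining the two directions, the reduction is correct, and NP-hardness of 3-PARTITION (Theorem \ref{partition}) transfers to the Tetris clearing problem. The delicate point underlying everything above is the repeated "just enough pieces" bookkeeping in Lemmas \ref{one}--\ref{four}: one must check that the supplied multiset of pieces is exactly tight, so that wasting a single piece, requiring one extra cleared line, or stacking anything above the buckets is truly irreparable --- it is this tightness that promotes the purely local, figure-by-figure "creates an uncoverable gap" observations into the global conclusion that the board is unclearable.
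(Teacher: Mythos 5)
Your proposal is correct and follows exactly the argument the paper intends: the paper itself states Theorem \ref{ending} with no explicit proof, leaving the reader to assemble Section 3 (yes-instances are clearable) with Lemmas \ref{one}--\ref{ten} (no-instances are not) plus the strong NP-completeness of 3-PARTITION for polynomiality, which is precisely what you do. If anything, your write-up --- in particular the explicit $t_j$ counting argument and the remark on why strong NP-completeness is needed for the board size to be polynomial --- supplies detail the paper omits.
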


\section{Acknowledgements}

We wish to thank Michael Wehar and the community of computer science stack exchange (cs.stackexchange.com) for helpful criticism and commentary

\end{document}